\documentclass[11pt]{article}

\usepackage[margin=1in]{geometry}
\usepackage[T1]{fontenc}
\usepackage[utf8]{inputenc}
\usepackage{lmodern}
\usepackage{microtype}
\usepackage{amsmath,amssymb,amsfonts,amsthm}
\usepackage{mathtools}
\usepackage{booktabs}
\usepackage{array}
\usepackage{enumitem}
\usepackage{natbib}
\usepackage{hyperref}
\usepackage{xurl}
\usepackage{cleveref}
\usepackage{caption}
\usepackage{longtable}

\hypersetup{
  colorlinks=true,
  linkcolor=blue,
  citecolor=blue,
  urlcolor=blue,
  pdftitle={Toward AI-Resilient Assessment in Computer Science Courses in an AI-Native World: Formalism and Construction},
  pdfauthor={Anshumali Shrivastava}
}

\setlist[itemize]{topsep=2pt,itemsep=2pt,leftmargin=*}
\setlist[enumerate]{topsep=2pt,itemsep=2pt,leftmargin=*}

\theoremstyle{definition}
\newtheorem{definition}{Definition}

\newtheorem{designcriterion}{Design Criterion}
\theoremstyle{plain}
\newtheorem{theorem}{Theorem}
\newtheorem{proposition}{Proposition}
\newtheorem{corollary}{Corollary}
\theoremstyle{remark}
\newtheorem{remark}{Remark}

\newcommand{\R}{\mathbb{R}}
\newcommand{\PF}{\operatorname{PF}}
\newcommand{\Down}{\operatorname{Down}}
\newcommand{\HV}{\operatorname{HV}}
\newcommand{\AI}{\mathrm{AI}}
\newcommand{\BF}{\mathrm{BF}}
\newcommand{\MB}{\mathrm{MB}}
\newcommand{\GB}{\mathrm{GB}}
\newcommand{\doi}[1]{\href{https://doi.org/#1}{doi:#1}}
\newcommand{\Priv}{\mathrm{priv}}
\newcommand{\bud}{\mathrm{bud}}
\newcommand{\evalerr}{\mathrm{eval}}
\newcommand{\srv}{\mathrm{srv}}
\newcommand{\hid}{\mathrm{hid}}
\newcommand{\pospart}[1]{\left[#1\right]_+}

\title{\vspace{-1.0em}Toward AI-Resilient Assessment in Computer Science Courses in an AI-Native World\\\large Formalism and Construction}
\author{Anshumali Shrivastava\\\texttt{anshumali@rice.edu}\\Department of Computer Science, Rice University}
\date{Position paper, 15th June 2026}

\begin{document}
\maketitle

\section*{Preface and Positioning}

This paper is motivated by the author's experience during a sabbatical from Rice University, spent as a full-time software engineer close to frontier AI development at Meta Superintelligence Labs. From that vantage point, the rapid transformation of software engineering practice made clear that senior computer science courses must be refined for an AI-native world.

That refinement should not be driven only by speculation, anxiety about cheating, or analogies to earlier technologies. It should begin by making the goals of a course explicit. The purpose of this paper is to move the discussion toward concrete objects that can be inspected and debated.

This paper is an attempt to formalize the ideas of AI-resilient skills and their assessments and present a concrete assignment and evaluation design for community feedback. The proposal focuses on senior undergraduate and graduate computer science coursework, where students are expected to combine algorithmic understanding, implementation skill, and systems-level judgment. The paper argues that once the goals of an AI-native assignment are made explicit, AI-resilient evaluation becomes a design problem.

The proposal is preliminary and does not claim to solve all problems in AI-native education. Several limitations are discussed in Section~\ref{sec:discussion}. The aim is to put forward a concrete framework for critique, refinement, and debate before a planned Fall 2026 deployment in COMP 480/580 at Rice University, where the author is the course instructor.

\begin{abstract}
AI-native course assessments in senior computer science courses and related fields should grade students by \emph{AI-resilient skill}: the ability to achieve outcomes beyond a strong AI baseline. Such assessments should allow students to use AI freely, while reducing the extent to which greater private AI budget or more intensive AI use, by itself, becomes a grading advantage. This paper proposes a minimal formal framework for this goal. The framework specifies a real task, an executable evaluator, a declared AI-native Pareto frontier, and a grading rule based on Pareto surplus. The central claim is simple: Pareto surplus provides a measurable, protocol-relative certificate that a submitted artifact achieves a tradeoff not already supplied by the declared AI baseline, and grading by this surplus is AI-resilient with respect to that baseline. Interpreting surplus as evidence of student skill requires the surrounding assessment protocol--for example, design reports, ablations, prompt traces, oral checks, or reproducibility explanations--but the grading certificate itself is behavioral and executable. The framework is then extended to practical complications, including self-improving AI loops, budget neutrality, server-mediated feedback, and prompt-based red teaming. As a concrete instantiation, we describe an AI-resilient approximate-membership assignment centered on Bloom filters for COMP 480/580 at Rice University, designed to test whether students can improve beyond AI-generated implementations.
\end{abstract}

\section{Introduction}

Generative AI has changed the meaning of a programming assignment. A student can now produce code, explanations, tests, reports, and design sketches using an AI system whose output may be difficult to distinguish from human work. The immediate institutional reaction has often been to ask how to prohibit, detect, or proctor against AI use. Those questions matter, but they do not address the central educational question:
\begin{quote}
What skill is the student learning that makes them more capable than a person who merely has access to AI?
\end{quote}

Recent evidence suggests that AI use in coursework is no longer marginal. The 2026 HEPI/Kortext survey of UK undergraduates reports that 95\% of students use AI in at least one way and 94\% use generative AI to help with assessed work; it also reports that fewer than half feel teaching staff are helping them develop AI skills for future careers \citep{hepi2026}. A 2026 study in \emph{Science}, based on responses from 95,513 students at 20 major public research-intensive universities in the United States, argues for discipline-specific assessment reform rather than blanket bans or universal detection regimes \citep{chirikov2026}. Princeton's 2026 decision to require proctoring for all in-person examinations, after a long honor-code tradition, is one visible example of how AI has changed the cost of relying only on trust and post-hoc reporting \citep{williams2026princeton}.

For senior-level computer science courses, the right response is not to pretend that AI does not exist. Introductory courses may still need short no-AI checkpoints for syntax, definitions, and prerequisite reasoning. But advanced courses should also ask what students can do after AI has produced the standard implementation, standard proof sketch, and standard explanation. In that setting, AI should become the baseline, collaborator, and instrument that the student must learn to direct.

The calculator analogy is useful but incomplete. Calculators did not remove the need to understand arithmetic; they changed what technical education should emphasize. It would be wasteful for an engineering course to spend most of its effort grading long manual arithmetic while pretending calculators do not exist. Similarly, once AI can generate routine code and explanations, an advanced assignment can move student effort toward evaluator interpretation, performance diagnosis, algorithm--systems tradeoffs, experimental design, robustness, and scale. TEQSA and UNESCO guidance likewise warns against relying only on detectors, prohibition, or conventional ``AI-proof'' exams and argues for broader adaptation in assessment and pedagogy \citep{lodge2024teqsa,lodge2023calculator,unesco2023}.

The same AI tools that force a rethinking of student assessment also make better assessment design more feasible for faculty. An AI-resilient assignment requires more than a prompt and a rubric: it needs an executable evaluator, strong baselines, benchmark instances, feedback rules, and red-team attempts. This would have been expensive to construct manually for every topic. AI-assisted coding changes that cost structure. Instructors can now use AI to prototype evaluators, generate baseline solutions, stress-test hidden cases, profile implementations, and search for weaknesses in the grading rule. The faculty role therefore shifts upward as well: from manually constructing every artifact to specifying the educational objective, validating the evaluator, choosing the revealed feedback, and deciding what frontier movement should count as evidence of skill. Thus AI-native assessment is not only a burden created by AI; it is also made practical by AI.

This paper proposes a direct design rule:
\begin{quote}
A senior-level AI-native assignment should grade what a trained student can achieve with AI beyond an explicit AI-native frontier, not whether the student avoided AI.
\end{quote}

The paper starts with the smallest useful mathematical model. A real task has an executable evaluator that returns a vector of scores. AI-generated submissions form a baseline set. Their nondominated scores form an AI-native Pareto frontier. A student submission earns surplus only if it expands the dominated hypervolume beyond that frontier. Under this model, the frontier is not decoration or leaderboard gamification; it is the operational measure of whether the submitted artifact has produced a tradeoff that the declared AI baseline did not provide.

The key theorem then becomes meaningful. If grades are assigned by Pareto surplus over the AI-native frontier, then AI-native submissions receive only the AI-baseline grade, and any above-baseline grade certifies measured improvement beyond the frontier. This is the mathematical reason to use Pareto frontiers in the assignment: they turn a vague debate about cheating into a concrete question about measurable frontier expansion.

After presenting this minimal core, the paper extends the framework to the complications that real courses face: self-improving AI loops, unequal AI budgets, server-mediated feedback, hidden inputs and outputs, thresholded grading, and prompt-based red teaming. These extensions are important, but they should not obscure the central idea. The core of the paper is the claim that movement beyond an AI-native Pareto frontier is the measurable artifact-level signal around which AI-resilient skill can be assessed.

The concrete construction is an approximate-membership assignment centered on Bloom filters for COMP 480/580 at Rice University. Students are not merely asked to implement a textbook Bloom filter. They are given empirical AI-native frontiers for two regimes--MB- and GB-scale data--and asked to expand one or more frontiers under hidden evaluation. The target skill includes Bloom-filter theory, false-positive analysis, hash design, memory layout, cache behavior, construction throughput, query throughput, and AI-assisted engineering judgment.

\paragraph{Contributions.}
The paper makes five contributions.
\begin{enumerate}
    \item It gives a minimal formal model in which Pareto frontier expansion measures artifact-level performance beyond an AI baseline and can serve as evidence of AI-resilient skill under an assessment protocol.
    \item It proves that Pareto-surplus grading is AI-resilient with respect to the declared AI-native frontier.
    \item It extends the minimal model to self-improving AI loops, budget neutrality, server-mediated feedback, AI-use competence, and prompt-skill red-teaming criteria.
    \item It explains how executable surplus grading can be automated and how relative surplus grading can reduce incentives to leak frontier-moving solutions.
    \item It instantiates the framework as a two-regime Bloom-filter assignment for COMP 480/580.
\end{enumerate}

\section{Minimal Framework}
\label{sec:minimal}

This section gives the bare minimum needed to state the main claims. The extensions in \Cref{sec:extensions} specify how the AI baseline is generated, how server feedback is controlled, and how private AI-budget effects are handled. The core model only needs four objects: a task, an evaluator, an AI-native frontier, and a grading rule.

\subsection{Real tasks and score vectors}

We treat a program in the standard theory-of-computation sense as a finite effective description of a computation; equivalently, for this paper, an executable artifact may be modeled as a Turing-machine description or partial computable function \citep{sipser2012}. We do not formalize programming-language syntax further. The course-specific object is the interface and benchmark protocol that separates raw submissions, admissible submissions, and failed submissions.

\begin{definition}[Task interface and benchmark protocol]
A \emph{task interface and benchmark protocol} is a tuple
\[
\mathcal I=(\Omega_{\mathcal I},\mathrm{API},\mathcal B_{\mathrm{dev}},\mathcal B_{\mathrm{eval}},\mathcal C,\mathcal E_{\mathrm{run}},m,\mathcal N),
\]
where \(\Omega_{\mathcal I}\) is the raw submission space of executable artifacts that attempt the assignment; \(\mathrm{API}\) specifies the required program interface and allowed dependencies; \(\mathcal B_{\mathrm{dev}}\) is the public development benchmark; \(\mathcal B_{\mathrm{eval}}\) is the hidden or final evaluation benchmark; \(\mathcal C\) is the set of hard correctness constraints; \(\mathcal E_{\mathrm{run}}\) specifies hardware, compiler, flags, timeouts, memory limits, and random-seed policy; \(m=(m_1,\ldots,m_d)\) is the vector of measured raw objectives; and \(\mathcal N\) specifies a scoring map \(\nu_{\mathcal I}\), the official larger-is-better score coordinates, the bounded nonfailing score box, clipping rules, the reference point, and the fixed failing score \(f_{\mathrm{fail},\mathcal I}\). The final benchmark may be a fixed suite, a distribution over instances, or a seed-based sampler.

Once \(\mathcal I\) is fixed, it induces the admissible submission set
\[
\Pi_{\mathcal I}=\{\pi\in\Omega_{\mathcal I}: \pi \text{ satisfies } \mathrm{API}, \text{ runs under } \mathcal E_{\mathrm{run}}, \text{ and satisfies } \mathcal C \text{ on } \mathcal B_{\mathrm{eval}}\}.
\]
A raw submission outside \(\Pi_{\mathcal I}\) is nonadmissible. It may receive the fixed failing score for grading and audit logs, but it is excluded from AI-frontier construction and from positive surplus credit.
\end{definition}

\begin{definition}[Real task]
A \emph{real task} is a tuple
\[
T=(\mathcal I,E_T),
\]
where \(\mathcal I\) is a task interface and benchmark protocol, and \(E_T:\Omega_{\mathcal I}\to\R^d\) is a total executable evaluator in the official normalized score coordinates. For every admissible submission \(\pi\in\Pi_{\mathcal I}\),
\[
E_T(\pi)=\nu_{\mathcal I}(m(\pi)).
\]
For nonadmissible raw submissions, \(E_T(\pi)=f_{\mathrm{fail},\mathcal I}\). All hypervolume and Pareto-frontier claims below concern admissible, nonfailing score vectors; nonadmissible submissions receive no surplus credit.
\end{definition}

\noindent\textbf{Plain meaning.} A real task is not just a written assignment. It specifies what students may submit, how submissions are run, which hidden benchmarks are used, which constraints are mandatory, which score transformations are official, and which numerical objectives determine the score. The submitted artifact is judged by executable behavior, not by authorship.

The score order is not an arbitrary component of the task. Throughout the paper, all official score coordinates are normalized so that larger values are better, and order means coordinate domination. For score vectors \(x,y\in\R^d\), write \(x\preceq y\) to mean that \(y\) weakly dominates \(x\), i.e., \(x_i\leq y_i\) for every coordinate. Write \(x\prec y\) when \(y\) strictly dominates \(x\): \(x\preceq y\) and \(x_j<y_j\) for at least one coordinate.

\subsection{AI-native Pareto frontier}

Let \(S\subseteq\R^d\) be a finite set of score vectors. The Pareto frontier is the nondominated subset
\[
\PF(S)=\{s\in S:\nexists s'\in S\text{ such that }s\prec s'\}.
\]
This definition is standard in multi-objective optimization \citep{deb2001,miettinen1999}.

\begin{definition}[Declared AI-native baseline]
For a real task \(T\), a \emph{declared AI-native baseline} is a finite, nonempty set
\[
\Pi_{\AI}(T)\subseteq\Pi_{\mathcal I}
\]
of admissible submissions frozen before grading. Its elements are either produced by the stated baseline-generation process or are explicitly declared admissible fallback submissions. It is not the set of all programs that any present or future AI system could ever produce. If no AI-generated candidate is admissible after applying the hard constraints, the assignment must include at least one declared fallback submission in \(\Pi_{\mathcal I}\), and the baseline-freeze record must identify it as a fallback rather than as an AI-generated result.
\end{definition}

The AI-native score set is
\[
S_{\AI}(T)=\{E_T(\pi):\pi\in\Pi_{\AI}(T)\},
\]
and the AI-native Pareto frontier is
\[
F_{\AI}(T)=\PF(S_{\AI}(T)).
\]

\noindent\textbf{Plain meaning.} The AI-native frontier is the declared set of best-known AI-generated tradeoffs for the assignment. It is the line students must move to receive surplus credit.

\subsection{Pareto surplus}

Before the assignment is released, the task fixes the official score coordinates, all normalizations or monotone transformations, a bounded nonfailing score box, and a reference point \(r\in\R^d\) strictly dominated by every nonfailing score in that box. Scores outside the box are either clipped to the box or assigned the fixed failing score according to the public evaluator rule. The failing score is not used as a frontier point. The reference point is not chosen after seeing student submissions. Hypervolume surplus is computed only in these official normalized coordinates; different transformations can induce different hypervolume values, so the transformations are part of the task definition.

Throughout the formal grading claims, every set passed to \(\HV\) is finite. For a finite set \(F\subset\R^d\), define the dominated hypervolume
\[
\HV(F;r)=\lambda_d\left(\bigcup_{s\in F}[r_1,s_1]\times\cdots\times[r_d,s_d]\right),
\]
where \(\lambda_d\) is \(d\)-dimensional Lebesgue measure. Hypervolume is a standard indicator for multi-objective performance \citep{zitzler1999,zitzler2003}.

For an admissible submission \(\pi\in\Pi_{\mathcal I}\), its Pareto surplus over the AI-native frontier is
\[
\delta_T(\pi)
=
\HV\left(F_{\AI}(T)\cup\{E_T(\pi)\};r\right)
-
\HV\left(F_{\AI}(T);r\right).
\]
For a nonadmissible raw submission, define \(\delta_T(\pi)=0\) for the surplus component; the complete grade may still assign a failing score outside the surplus component.

\noindent\textbf{Plain meaning.} Pareto surplus is the new tradeoff volume added by the submitted artifact. If \(\delta_T(\pi)=0\), the artifact did not move the frontier. If \(\delta_T(\pi)>0\), the artifact achieved a measured tradeoff not already covered by the declared AI baseline.

\subsection{AI-resilient skill}

Let \(\sigma\) be a candidate human skill. In the Bloom-filter example, \(\sigma\) may include probabilistic analysis, bits-per-key tradeoffs, hash-function design, cache-aware layout, vectorization, parallel construction, and AI-guided engineering judgment. The paper does not try to model cognition directly. It models the finite evidence produced by an assessment protocol.

Let \(\Gamma\) denote a specified finite assessment protocol: for example, a finite set of students, time limits, permitted AI use, required reports, oral checks, and the evaluator runs used to produce scores. For a skill \(\sigma\), let \(S_H^\Gamma(T;\sigma)\) be the finite set of score vectors produced by that protocol for students or humans possessing \(\sigma\). Define
\[
\Delta_T^\Gamma(\sigma)
=
\HV\left(F_{\AI}(T)\cup S_H^\Gamma(T;\sigma);r\right)
-
\HV\left(F_{\AI}(T);r\right).
\]

\begin{definition}[AI-resilient human skill]
A skill \(\sigma\) is \emph{AI-resilient for task \(T\) under assessment protocol \(\Gamma\)} if \(\Delta_T^\Gamma(\sigma)>0\). It is \(\varepsilon\)-AI-resilient if \(\Delta_T^\Gamma(\sigma)\geq\varepsilon\) for a pre-specified margin \(\varepsilon>0\).
\end{definition}

\noindent\textbf{Plain meaning.} A skill is AI-resilient when the assessment protocol produces artifacts that move the measured frontier. Pareto surplus is first an artifact-level certificate. Treating it as evidence of student skill is strongest when paired with supporting evidence such as design reports, ablations, prompt traces, oral checks, or reproducibility explanations.

\subsection{AI-resilient grading}

\begin{definition}[AI-resilient surplus grading]
Let \(g_{\AI}\in\R\) be the declared AI-baseline surplus grade. A surplus grading component \(G_T:\Omega_{\mathcal I}\to\R\) is \emph{AI-resilient with respect to the declared AI frontier} if nonadmissible submissions receive no above-baseline surplus credit, every AI-native submission receives at most \(g_{\AI}\), and any surplus grade above \(g_{\AI}\) requires an admissible submission with positive Pareto surplus:
\[
\pi\notin\Pi_{\mathcal I}\implies G_T(\pi)\leq g_{\AI},
\qquad
\pi\in\Pi_{\AI}(T)\implies G_T(\pi)\leq g_{\AI},
\]
\[
G_T(\pi)>g_{\AI}\implies \pi\in\Pi_{\mathcal I}\text{ and }\delta_T(\pi)>0.
\]
\end{definition}

\noindent\textbf{Plain meaning.} The surplus component of the grade does not ask who typed the code. It asks whether the submitted system does something measurably better than the declared AI baseline. A complete course grade may also include correctness, reproducibility, report quality, or minimum-performance requirements; zero surplus should not be read as matching the AI frontier unless those baseline-level requirements are also met.

\section{Core Claims}
\label{sec:core-claims}

The definitions above are useful only if they imply concrete grading statements. The results in this section are conditional design guarantees: once the task, coordinate score order, score transformations, reference point, AI baseline, feedback policy, and grading rule are fixed, the following statements hold. They do not claim that the declared baseline exhausts all possible AI systems, nor that artifact-level surplus alone identifies the causal source of the improvement.

\begin{theorem}[Pareto surplus characterizes frontier expansion]
\label{thm:skill-measure}
For a real task \(T\), let \(F_{\AI}(T)\) be the declared AI-native frontier, and assume scores lie in the fixed nonfailing score box strictly above the reference point \(r\). For any admissible submission \(\pi\),
\[
\delta_T(\pi)>0
\quad\Longleftrightarrow\quad
E_T(\pi)\notin\Down(F_{\AI}(T)),
\]
where \(\Down(F)=\{x:\exists y\in F\text{ with }x\preceq y\}\). Consequently, for a candidate skill \(\sigma\) evaluated through a finite assessment protocol,
\[
\Delta_T^\Gamma(\sigma)>0
\quad\Longleftrightarrow\quad
\exists s\in S_H^\Gamma(T;\sigma)\text{ such that }s\notin\Down(F_{\AI}(T)).
\]
\end{theorem}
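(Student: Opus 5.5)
The plan is to reduce everything to a measure-theoretic statement about boxes anchored at \(r\). For a point \(s\) strictly above \(r\), write \(B(s)=[r_1,s_1]\times\cdots\times[r_d,s_d]\), and set \(U=\bigcup_{y\in F_{\AI}(T)}B(y)\). Since every set involved is finite, all hypervolumes are finite. By additivity of \(\lambda_d\),
\[
\delta_T(\pi)=\lambda_d\bigl(U\cup B(s)\bigr)-\lambda_d(U)=\lambda_d\bigl(B(s)\setminus U\bigr),
\]
where \(s=E_T(\pi)\). The first claim is therefore equivalent to the statement that \(\lambda_d(B(s)\setminus U)>0\) if and only if \(s\notin\Down(F_{\AI}(T))\).

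For the direction (\(\Leftarrow\) fails \(\Rightarrow\)), suppose \(s\in\Down(F_{\AI}(T))\), so \(s\preceq y\) for some \(y\in F_{\AI}(T)\). Then \(B(s)\subseteq B(y)\subseteq U\), and hence \(\delta_T(\pi)=0\).

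For the converse, suppose \(s\notin\Down(F_{\AI}(T))\). Then every \(y\in F_{\AI}(T)\) has some coordinate \(j(y)\) with \(y_{j(y)}<s_{j(y)}\). I will exhibit a small box near the corner \(s\) that misses \(U\). Let
\[
\eta=\min\Bigl(\min_{y\in F_{\AI}(T)}\bigl(s_{j(y)}-y_{j(y)}\bigr),\ \min_i\,(s_i-r_i)\Bigr).
\]
This minimum is positive because \(F_{\AI}(T)\) is finite and \(s\) is strictly above \(r\). Consider the box \(Q=\prod_i(s_i-\eta,s_i]\), which lies inside \(B(s)\). Any point \(x\in Q\) satisfies \(x_{j(y)}>y_{j(y)}\), so \(x\notin B(y)\) for every \(y\). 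Thus \(Q\cap U=\emptyset\), and \(\lambda_d(B(s)\setminus U)\ge\eta^d>0\). The main obstacle is this quantitative step, and it is exactly where finiteness of the frontier and the strict dominance of \(r\) are needed. A subsidiary point is that \(\Down\) is taken over the frontier \(F_{\AI}(T)\) rather than over all of \(S_{\AI}(T)\). This causes no difficulty here, because the argument uses only \(F_{\AI}(T)\) itself.

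For the skill-level consequence, I replace \(\{s\}\) by the finite set \(S_H=S_H^\Gamma(T;\sigma)\). The same additivity gives
\[
\Delta_T^\Gamma(\sigma)=\lambda_d\Bigl(\bigcup_{s\in S_H}B(s)\setminus U\Bigr).
\]
If every \(s\in S_H\) lies in \(\Down(F_{\AI}(T))\), then each \(B(s)\subseteq U\) and the difference is \(0\). If instead some \(s\in S_H\) lies outside \(\Down(F_{\AI}(T))\), the union contains \(B(s)\setminus U\), which has positive measure by the single-point case. This assumes that the scores in \(S_H\) are admissible and lie in the box, as in the standing hypothesis.
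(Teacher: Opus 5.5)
Your proof is correct and follows essentially the same route as the paper. One direction uses containment of $B(s)$ in a frontier box. The other uses a small box at the corner $s$ whose side length is the positive minimum of the gaps $s_{j(y)}-y_{j(y)}$ and the distances $s_i-r_i$, and the set-level claim is reduced to the single-point case. Your explicit identity $\delta_T(\pi)=\lambda_d(B(s)\setminus U)$ simply makes precise a step the paper states in words.
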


\noindent\textbf{Practical implication.} Pareto surplus is a measurable, artifact-level certificate of frontier expansion. If trained students consistently produce positive surplus, the course has evidence that the target skill creates outcomes beyond the declared AI baseline. If they cannot, then the skill may not be AI-resilient for this task, the baseline may already capture the skill, or the evaluator may not measure the skill. The proof is in \Cref{app:proof-skill}.

\begin{corollary}[Null-frontier outcome]
\label{cor:null-frontier}
For a real task \(T\), declared AI-native frontier \(F_{\AI}(T)\), and finite assessment score set \(S_H^\Gamma(T;\sigma)\), suppose
\[
\Delta_T^\Gamma(\sigma)=0.
\]
Then the assessment protocol has not witnessed \(\sigma\) as an AI-resilient skill for \(T\). Equivalently, every score vector produced by the finite assessment protocol is dominated by the declared AI-native frontier.
\end{corollary}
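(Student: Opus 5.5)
The corollary follows from the set-level half of Theorem~\ref{thm:skill-measure}. By the definition of AI-resilient human skill, \(\sigma\) is witnessed as AI-resilient for \(T\) under \(\Gamma\) exactly when \(\Delta_T^\Gamma(\sigma)>0\). So the first claim is a restatement of the hypothesis: if \(\Delta_T^\Gamma(\sigma)=0\), the defining strict inequality fails. No \(\varepsilon\)-margin version can hold either, since \(\varepsilon>0\).

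For the ``equivalently'' clause I will take the contrapositive of the theorem's second equivalence. That equivalence says \(\Delta_T^\Gamma(\sigma)>0\) iff some \(s\in S_H^\Gamma(T;\sigma)\) lies outside \(\Down(F_{\AI}(T))\). I first check that \(\Delta_T^\Gamma(\sigma)\ge 0\) always. Hypervolume is Lebesgue measure of a union of boxes, and \(F_{\AI}(T)\subseteq F_{\AI}(T)\cup S_H^\Gamma(T;\sigma)\), so the union only grows and the measure cannot decrease. All sets here are finite and the boxes are bounded, so both hypervolumes are finite and the difference is well defined. Hence \(\Delta_T^\Gamma(\sigma)=0\) is equivalent to \(\neg(\Delta_T^\Gamma(\sigma)>0)\). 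By the theorem this is equivalent to every \(s\in S_H^\Gamma(T;\sigma)\) lying in \(\Down(F_{\AI}(T))\), that is, \(s\preceq y\) for some frontier point \(y\).

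The only point needing care is the phrase ``dominated by the declared AI-native frontier.'' It must be read as weak domination, membership in \(\Down(F_{\AI}(T))\), because a student score equal to a frontier point also contributes zero surplus. I will state this reading explicitly.

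I also need the theorem's standing hypothesis: the scores must be admissible, nonfailing, and lie in the box strictly above \(r\). I will note that \(S_H^\Gamma\) is taken to consist of such scores. Nonadmissible submissions carry the failing score and are excluded from hypervolume claims by the task definition, so they neither break the argument nor count as witnesses.

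I expect no real obstacle. The main risk is the nonnegativity step, which turns ``\(=0\)'' into ``not \(>0\)''. That step rests only on monotonicity of Lebesgue measure under set inclusion.
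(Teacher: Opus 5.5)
Your proposal is correct and matches the paper's intended argument. The paper gives no separate proof and treats the corollary as immediate from the definition of AI-resilient skill and the set-level equivalence in Theorem~\ref{thm:skill-measure}, which is exactly your route; your nonnegativity step (monotonicity of Lebesgue measure, so \(\Delta_T^\Gamma(\sigma)=0\) iff not \(\Delta_T^\Gamma(\sigma)>0\)), your reading of ``dominated'' as membership in \(\Down(F_{\AI}(T))\), and your restriction to admissible nonfailing scores fill in details the paper leaves implicit.
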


\noindent\textbf{Practical implication.}
A \emph{null-frontier outcome} is a diagnostic signal: under this task, evaluator, baseline, and assessment protocol, the proposed skill did not produce measurable frontier movement. This does not prove that the skill is useless. It may mean that the evaluator is too narrow, the baseline is already strong, the assignment did not expose the intended skill, or students were not given enough support. But it does raise the curricular question of whether this skill should be the central grade-bearing target of a senior AI-native assignment.

Define the surplus component of the grade by
\[
G_T(\pi)=g_{\AI}+\phi(\delta_T(\pi)),
\]
where \(\phi:\R_{\geq0}\to\R_{\geq0}\) is monotone and \(\phi(0)=0\), and where nonadmissible submissions have \(\delta_T(\pi)=0\) for this surplus component. The map \(\phi\) is part of the official grading policy. It may be an absolute scale, a percentile-calibrated scale over final surplus values, or a hybrid. In all cases, the input to \(\phi\) is the evaluator-computed surplus, not the claimed authorship of the submission.

\begin{theorem}[Pareto-surplus grading is AI-resilient]
\label{thm:pareto-grading}
Let \(T\) be a real task and let \(F_{\AI}(T)=\PF(S_{\AI}(T))\) be the declared AI-native frontier. Under Pareto-surplus grading:
\begin{enumerate}
    \item every AI-native submission in \(\Pi_{\AI}(T)\) receives surplus grade exactly \(g_{\AI}\);
    \item every submission receiving surplus grade above \(g_{\AI}\) has positive Pareto surplus beyond \(F_{\AI}(T)\).
\end{enumerate}
\end{theorem}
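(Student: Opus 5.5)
The proof reduces both claims to Theorem~\ref{thm:skill-measure} together with the two defining properties of \(\phi\): \(\phi(0)=0\) and \(\phi\geq 0\). Under Pareto-surplus grading, \(G_T(\pi)=g_{\AI}+\phi(\delta_T(\pi))\), and nonadmissible submissions are assigned \(\delta_T(\pi)=0\). Every statement about \(G_T\) therefore turns into a statement about whether \(\delta_T(\pi)\) is zero or positive.

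For part~1, take \(\pi\in\Pi_{\AI}(T)\). By the definition of the declared baseline, \(\Pi_{\AI}(T)\subseteq\Pi_{\mathcal I}\), so \(\pi\) is admissible and \(s=E_T(\pi)\in S_{\AI}(T)\). We show \(s\in\Down(F_{\AI}(T))\) by the standard finite-set argument.
\begin{itemize}
\item If \(s\in\PF(S_{\AI}(T))\), then \(s\preceq s\) gives membership directly.
\item Otherwise some \(s'\in S_{\AI}(T)\) strictly dominates \(s\). Iterate: if \(s'\) is not on the frontier, pick an element strictly dominating \(s'\), and so on.
\item Strict domination is a strict partial order (irreflexive and transitive), so no element repeats along this chain. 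Since \(S_{\AI}(T)\) is finite, the chain terminates at some \(s^\ast\in F_{\AI}(T)\).
\item Transitivity of \(\preceq\) then gives \(s\preceq s^\ast\).
\end{itemize}
Theorem~\ref{thm:skill-measure} now gives \(\delta_T(\pi)=0\), since \(\delta_T\geq 0\) and it is not positive. Hence \(G_T(\pi)=g_{\AI}+\phi(0)=g_{\AI}\).

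Alternatively, one can avoid the theorem and argue directly. Since \(s\preceq s^\ast\), the box \([r,s]\) is contained in \([r,s^\ast]\). Adding \(s\) to \(F_{\AI}(T)\) therefore leaves the union of boxes unchanged, so the hypervolume is unchanged.

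For part~2, suppose \(G_T(\pi)>g_{\AI}\). Then \(\phi(\delta_T(\pi))>0=\phi(0)\), so \(\delta_T(\pi)\neq 0\). We also need \(\delta_T(\pi)\geq 0\), which holds in both cases:
\begin{itemize}
\item For admissible \(\pi\), it follows from monotonicity of Lebesgue measure under inclusion of the box unions.
\item For nonadmissible \(\pi\), it holds by definition, since \(\delta_T(\pi)=0\).
\end{itemize}
Hence \(\delta_T(\pi)>0\). This also rules out nonadmissible \(\pi\), whose surplus is \(0\), so \(\pi\in\Pi_{\mathcal I}\). By Theorem~\ref{thm:skill-measure}, \(E_T(\pi)\notin\Down(F_{\AI}(T))\), which is the stated frontier expansion.

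The same reasoning also yields the other two conditions in the definition of AI-resilient surplus grading. Nonadmissible submissions receive exactly \(g_{\AI}\), and baseline submissions receive exactly \(g_{\AI}\), so \(G_T\) is AI-resilient in the sense defined above.

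The only step needing real care is part~1: showing that every baseline score lies in the down-set of the frontier. This depends on finiteness of \(S_{\AI}(T)\) and on strict dominance being a strict partial order. Ties (equal score vectors) cause no trouble, because weak dominance \(\preceq\) covers equality. Everything else is bookkeeping.
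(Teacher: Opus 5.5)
Your proof is correct and follows the paper's argument. A baseline score lies in \(\Down(F_{\AI}(T))\), so adding it leaves the hypervolume unchanged and gives \(G_T(\pi)=g_{\AI}+\phi(0)=g_{\AI}\); conversely, \(G_T(\pi)>g_{\AI}\) forces \(\delta_T(\pi)\neq 0\) and hence \(\delta_T(\pi)>0\). You also spell out two facts the paper only asserts: that every element of the finite set \(S_{\AI}(T)\) is weakly dominated by a frontier point, and that \(\delta_T\geq 0\), which the paper's terse converse step needs.
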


\noindent\textbf{Practical implication.} Simply giving the assignment to the baseline AI process earns the baseline surplus grade, because that behavior is already represented in the frontier. Above-baseline surplus credit starts only when the submitted artifact moves the measured frontier. The proof is in \Cref{app:proof-grading}.

\begin{remark}[Authorship detection is not the grading mechanism]
Under Pareto-surplus grading, the surplus grade depends on executable behavior relative to the published frontier. The course need not infer whether a submission was written by a human, by AI, or by both. Ordinary rules about collaboration, attribution, safety, and unauthorized access to hidden benchmarks still apply.
\end{remark}

\section{Extensions for Real Course Settings}
\label{sec:extensions}

The minimal framework establishes the measurement and grading principle. A real course needs additional machinery to make the frontier meaningful and the grading fair. This section adds that machinery without changing the core claim.

\subsection{AI-agent protocols and self-improving loops}

The phrase ``what AI can do'' is not a mathematical object. The course must specify a baseline-generation process. An AI-agent protocol is therefore a reproducible trace-generating procedure, not merely a list of model names.

\begin{definition}[AI-agent protocol]
An \emph{AI-agent protocol} for task \(T\) consists of metadata
\[
(\mathcal M,\mathcal P,b,\mathcal O,\mathsf{Sel})
\]
and an executable or auditable run procedure. Here \(\mathcal M\) is the class of models or agents, \(\mathcal P\) is the prompting and interaction protocol, \(b\) is a resource budget with a declared resource accounting rule, \(\mathcal O\) is the allowed oracle access, and \(\mathsf{Sel}\) is the declared candidate-selection rule. When run under the declared seeds or randomness, the protocol produces a finite trace \(\tau_{\mathcal A}\) of prompts, model outputs, tool calls, oracle responses, resource use, and candidate programs, with total resource use at most \(b\). The selected candidate set \(\Pi_{\mathcal A}(T)=\mathsf{Sel}(\tau_{\mathcal A})\subseteq\Omega_{\mathcal I}\) is finite.
\end{definition}

A reproducible baseline should include a baseline-freeze record: model names and versions, prompts and system instructions, tools and libraries, public and hidden feedback access, server-query budget, random seeds where applicable, compute or wall-clock budget, number of generated candidates, final filtering or selection procedure, and the date and time at which the frontier was frozen. These details are not theoretical decoration; they make the declared frontier auditable.

Modern AI systems can self-improve within a task by generating a candidate, observing feedback, revising the solution, and repeating. Therefore, if the assignment allows students to run such loops privately, the baseline should include them.

\begin{definition}[Looped AI protocol]
A \emph{looped AI protocol} is an AI-agent protocol whose trace contains a sequence
\[
\pi^{(0)},h^{(0)},\pi^{(1)},h^{(1)},\ldots,\pi^{(t)},
\]
where \(\pi^{(i)}\) is a candidate program, \(h^{(i)}\) is the feedback available under the oracle policy, each later candidate is generated from the previous trace according to the declared interaction protocol, and the total resource use of the sequence is at most the declared budget \(b\). The output is the finite set chosen from this trace by \(\mathsf{Sel}\).
\end{definition}

\noindent\textbf{Design rule.} Do not compare students against a one-shot AI baseline if students can use looped AI. Generate the baseline with looped AI under the same official feedback policy. A student should not receive above-baseline credit for rediscovering a generic generate--evaluate--revise loop that the course could have run before grading.

\subsection{Server-mediated feedback and hidden evaluation}

The assignment should not monitor private AI use. It should control the official information channel.

\begin{definition}[Server-mediated feedback oracle]
A \emph{server-mediated feedback oracle} for task \(T\) is a tuple
\[
\mathcal O_{\srv}=(\mathcal B_{\hid},q,E_{T,\hid},\psi),
\]
where \(\mathcal B_{\hid}\) is a hidden development benchmark specified by the interface or server policy, \(q\) is the official query budget, \(E_{T,\hid}:\Omega_{\mathcal I}\to\R^d\) is the server-side development evaluator in the same official score coordinates as \(E_T\), and \(\psi:\R^d\to\mathcal H\) maps full hidden-development scores to a restricted feedback space \(\mathcal H\). On a submitted program \(\pi\), the server computes \(E_{T,\hid}(\pi)\) but releases only
\[
\psi(E_{T,\hid}(\pi)).
\]
The final grading evaluator remains \(E_T\) on the final evaluation benchmark; the feedback oracle is an official information channel, not a redefinition of \(E_T\). The map \(\psi\) may reveal aggregate pass/fail status, coarse score bins, distance-to-frontier summaries, or leaderboard position. It should not reveal hidden inputs, hidden outputs, per-instance labels, or per-instance errors.
\end{definition}

\noindent\textbf{Design rule.} Students may use AI freely between server queries, but every student receives the same official feedback budget and the same type of feedback. The course controls benchmark information, not private thought or private AI use.

\subsection{Budget neutrality margin through thresholded surplus}

Budget neutrality is not an absolute claim that all students have identical private resources, or that no amount of additional AI use can ever help. It is a protocol-relative release condition. The assignment declares an envelope of plausible private AI-only protocols, restricts the official feedback channel, freezes an AI-native frontier, and sets a threshold large enough to absorb observed AI-only frontier movement and evaluator noise. The resulting claim is therefore conditional: within the declared envelope and feedback policy, private AI budget alone should not reliably produce grade-bearing surplus.

Students with more private AI budget may generate more candidates between server queries. The design goal is not to make AI-use skill worthless; it is to make private purchasing power insufficient, by itself, for grade-bearing surplus within the declared feedback policy and threshold. Let \(F_0\) be the public frontier used for grading, let \(\mathfrak A_{\Priv}\) be a declared envelope of plausible private AI-only protocols that receive no more than the official server feedback, and let \(E_T^*(\pi)\) denote the ideal deterministic score used for saturation analysis

The following is an idealized saturation assumption. It is useful because it states exactly what the threshold theorem requires, but in deployment it must be approximated by finite red-team experiments and conservative safety margins.

For a protocol \(\mathcal A\), let \(\Pi^*(\mathcal A)\subseteq\Pi_{\mathcal I}\) be the finite set of admissible submissions returned by its declared selection rule under ideal evaluation, and define
\[
S^*(\mathcal A)=\{E_T^*(\pi):\pi\in\Pi^*(\mathcal A)\}.
\]

Thus, when an AI-only protocol \(\mathcal A\) produces a submitted artifact \(\pi\) under this envelope, the model assumption is \(\pi\in\Pi^*(\mathcal A)\), and therefore \(E_T^*(\pi)\in S^*(\mathcal A)\). We say \(F_0\) is \(\varepsilon_{\bud}\)-saturated against the declared private AI-only envelope when
\[
\sup_{\mathcal A\in\mathfrak A_{\Priv}}
\left[
\HV(F_0\cup S^*(\mathcal A);r)-\HV(F_0;r)
\right]
\leq \varepsilon_{\bud}.
\]

In an actual course deployment, the true supremum over \(\mathfrak A_{\Priv}\) is not observable. The instructor instead runs a finite red-team suite
\(\mathfrak A_{\mathrm{red}}\subseteq\mathfrak A_{\Priv}\) before release and measures
\[
\varepsilon_{\bud}^{\mathrm{red}}
=
\max_{\mathcal A\in\mathfrak A_{\mathrm{red}}}
\left[
\HV(F_0\cup S^*(\mathcal A);r)-\HV(F_0;r)
\right].
\]
This red-team value is not a second grading threshold. It is used to choose the declared private-AI-budget margin
\[
\varepsilon_{\bud}
=
\varepsilon_{\bud}^{\mathrm{red}}
+
\gamma_{\mathrm{safety}},
\]
where \(\gamma_{\mathrm{safety}}\geq 0\) accounts for untested private AI-only strategies, model variation, prompt variation, and uncertainty in the red-team suite. The theorem below is conditional on this declared margin satisfying the saturation inequality above. If later red-team attempts move the frontier by more than \(\varepsilon_{\bud}\), the assignment has not achieved budget neutrality at the chosen margin; the frontier, feedback policy, or declared margin must be revised.

The evaluator may have measurement noise from timing variance, sampling variance in false-positive estimates, or randomized workloads. Let \(E_T^*(\pi)\) denote the ideal deterministic score and let \(\widehat E_T(\pi)\) denote the observed score. Define
\[
\delta_T^*(\pi;F_0)
=
\HV(F_0\cup\{E_T^*(\pi)\};r)-\HV(F_0;r),
\]
and define \(\widehat\delta_T(\pi;F_0)\) analogously using \(\widehat E_T(\pi)\). Assume the evaluator satisfies the one-sided hypervolume error bound
\[
\widehat\delta_T(\pi;F_0)
\leq
\delta_T^*(\pi;F_0)+\varepsilon_{\evalerr}
\]
uniformly for every admissible submission considered in grading, at the stated confidence level if the evaluator is randomized. The single deployed grading threshold is
\[
\eta_T
=
\varepsilon_{\bud}
+
\varepsilon_{\evalerr}.
\]

Thresholded grading uses the observed surplus:
\[
G_T^{\eta}(\pi)
=
 g_{\AI}+\phi\left(\pospart{\widehat\delta_T(\pi;F_0)-\eta_T}\right).
\]

\begin{theorem}[Thresholded grading separates surplus from budget and noise]
\label{thm:thresholded}
Assume \(F_0\) is \(\varepsilon_{\bud}\)-saturated against the declared private AI-only envelope \(\mathfrak A_{\Priv}\), and the evaluator satisfies the uniform one-sided hypervolume error bound above. Under thresholded grading, with the stated confidence level for the evaluator-noise bound, any AI-only submission inside the declared envelope receives no surplus grade above \(g_{\AI}\). Moreover, whenever the one-sided evaluator-error bound holds, any admissible submission with \(G_T^{\eta}(\pi)>g_{\AI}\) has ideal surplus
\[
\delta_T^*(\pi;F_0)>\varepsilon_{\bud}.
\]
\end{theorem}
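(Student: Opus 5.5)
The plan is a chain of inequalities resting on three facts: monotonicity of \(\phi\), the fact that \(\phi(0)=0\), and the two declared assumptions (saturation and the one-sided evaluator-error bound).

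\textbf{Part 1: AI-only submissions get no surplus grade.} Take an AI-only protocol \(\mathcal A\in\mathfrak A_{\Priv}\) and a submitted artifact \(\pi\). If \(\pi\) is nonadmissible, the convention \(\delta_T(\pi)=0\) gives \(G_T^{\eta}(\pi)=g_{\AI}+\phi(0)=g_{\AI}\).

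If \(\pi\) is admissible, the model assumption gives \(\pi\in\Pi^*(\mathcal A)\), so \(E_T^*(\pi)\in S^*(\mathcal A)\). I would prove a monotonicity lemma for hypervolume: if \(A\subseteq B\) are finite sets, then \(\HV(A;r)\le\HV(B;r)\). This holds because the union of boxes for \(A\) is contained in the union for \(B\), and Lebesgue measure is monotone. Applying it with \(F_0\cup\{E_T^*(\pi)\}\subseteq F_0\cup S^*(\mathcal A)\) gives
\[
\delta_T^*(\pi;F_0)\le \HV(F_0\cup S^*(\mathcal A);r)-\HV(F_0;r)\le\varepsilon_{\bud},
\]
where the second step is the saturation inequality (the supremum dominates each term).

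On the event where the evaluator bound holds, which has the stated confidence, we then get
\[
\widehat\delta_T(\pi;F_0)\le\delta_T^*(\pi;F_0)+\varepsilon_{\evalerr}\le\varepsilon_{\bud}+\varepsilon_{\evalerr}=\eta_T .
\]
Hence \(\pospart{\widehat\delta_T-\eta_T}=0\), and \(G_T^{\eta}(\pi)=g_{\AI}+\phi(0)=g_{\AI}\).

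\textbf{Part 2: an above-baseline grade forces ideal surplus above \(\varepsilon_{\bud}\).} I would argue by contraposition. Suppose \(\pi\) is admissible and \(\delta_T^*(\pi;F_0)\le\varepsilon_{\bud}\). The same evaluator bound gives \(\widehat\delta_T\le\eta_T\), so the positive part is zero and \(G_T^{\eta}(\pi)=g_{\AI}\), contradicting \(G_T^{\eta}(\pi)>g_{\AI}\). Equivalently, in the forward direction: \(G_T^{\eta}(\pi)>g_{\AI}\) means \(\phi(\pospart{\widehat\delta_T-\eta_T})>0=\phi(0)\), so its argument is nonzero, i.e. \(\widehat\delta_T>\eta_T\). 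The error bound then yields \(\delta_T^*>\eta_T-\varepsilon_{\evalerr}=\varepsilon_{\bud}\).

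\textbf{Expected obstacle.} The main care point is bookkeeping rather than depth. First, the probabilistic qualifier must be tracked: both conclusions hold on the single event where the uniform evaluator bound holds, and that event carries the stated confidence. Second, Part 1 uses the modelling assumption \(\pi\in\Pi^*(\mathcal A)\), so that the ideal score of the submitted artifact lies in \(S^*(\mathcal A)\). Third, nonadmissible raw submissions must be routed to the \(\delta=0\) convention, since \(E_T^*\) is only meaningful on admissible artifacts. The hypervolume monotonicity lemma is the only genuinely mathematical ingredient, and it is immediate.
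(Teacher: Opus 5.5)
Your proposal is correct and follows essentially the same route as the paper: the membership $E_T^*(\pi)\in S^*(\mathcal A)$ plus saturation bounds the ideal surplus by $\varepsilon_{\bud}$, the one-sided error bound then keeps the observed surplus at most $\eta_T$, and the converse runs the same inequality chain in reverse. Your explicit hypervolume-monotonicity lemma and your separate nonadmissible case only spell out steps the paper leaves implicit; in the paper, $\Pi^*(\mathcal A)\subseteq\Pi_{\mathcal I}$ already makes the submitted AI-only artifact admissible.
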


\noindent\textbf{Practical implication.} The threshold is the fairness buffer. Small improvements that could be explained by private AI-only search or measurement noise do not become surplus grades. Above-baseline surplus credit starts only after the artifact clears the declared private-AI-budget margin in the ideal evaluator, up to the stated evaluator-noise confidence level. The proof is in \Cref{app:proof-thresholded}.

\subsection{AI-use competence is part of the target}

Budget neutrality should not be confused with neutrality to AI skill. A student who can use AI well should often do better. Useful AI competence includes decomposing the task, writing diagnostic prompts, generating implementation variants, checking generated claims, using server feedback to form hypotheses, and combining model output with domain knowledge. In an AI-native course, this is not a loophole; it is one of the intended capabilities.

The grading design therefore separates two things:
\[
\text{productive AI-use competence} \quad \neq \quad \text{unequal private AI budget}.
\]
The former should be rewarded. The latter should be limited by hidden benchmarks, fixed server feedback, equal official query budgets, and thresholded surplus.

\section{Concrete Construction: AI-Resilient Bloom Filter Assignments}
\label{sec:construction}

We now instantiate the framework in a senior-level computer science assignment. The target skill is not merely the ability to state the definition of a Bloom filter or reproduce its standard implementation. The intended skill is the ability to use probabilistic data-structure knowledge and systems judgment to improve a performance frontier. This skill includes approximate set membership, false-positive analysis, hash-function design, memory-layout choices, insertion/query-time tradeoffs, cache behavior, parallelism, and empirical performance measurement.

The construction is centered on Bloom's classical approximate-membership structure, which trades space for a controlled false-positive probability while preserving no false negatives \citep{bloom1970}. Bloom filters and their variants are standard tools in databases, networks, storage systems, and distributed systems \citep{broder2004}. The assignment name can emphasize Bloom filters, but the admissible design space must be explicit: the course may either restrict submissions to Bloom-filter-family implementations or allow broader approximate-membership systems, such as blocked Bloom filters, quotient filters, cuckoo filters, xor/ribbon filters, or learned-filter variants. The evaluator only knows the interface; the handout must state which design family is allowed.

\subsection{Base task}

Let \(U\) be a universe of keys, and let \(S\subset U\) be a set of inserted elements with \(|S|=n\). A submitted program \(\pi\) implements
\[
\operatorname{insert}_{\pi}(x)
\quad\text{and}\quad
\operatorname{query}_{\pi}(x).
\]
The hard correctness constraint is no false negatives on the inserted set specified by the benchmark protocol:
\[
\forall x\in S,\qquad \operatorname{query}_{\pi}(x)=1.
\]
For each scale \(\lambda\), the handout declares a workload family
\[
\{\mathcal D_{\mathrm{neg}}^\lambda(\theta):\theta\in\Theta_\lambda\}.
\]
The final evaluator uses one hidden draw \(\theta_\lambda\) and sets
\[
\mathcal D_{\mathrm{neg}}^\lambda
=
\mathcal D_{\mathrm{neg}}^\lambda(\theta_\lambda).
\]
Students know the workload family, but not the hidden seed, split, mixture parameter, or final evaluation instances. The reported false-positive rate is an empirical estimate, or a high-confidence upper bound, from a declared number of nonmember queries:
\[
p_{\mathrm{fp},\lambda}(\pi)=\Pr_{z\sim\mathcal D_{\mathrm{neg}}^\lambda}[\operatorname{query}_{\pi}(z)=1]
\]
in the idealized definition, and its sampled estimate in the executable evaluator.

\paragraph{Operational evaluator details.}
For the assignment to be reproducible, the handout must fix the nonmember distribution, sample sizes for false-positive estimation, whether false-negative correctness is benchmark-only or high-confidence, the repeated-run methodology for timing, hardware, compiler, flags, thread pinning, cache warmup, randomized workload order, and the rule for computing \(\varepsilon_{\evalerr}\) in hypervolume units. These details are engineering details, but they are part of the mathematical object because they determine \(E_T\).

\subsection{Two scale regimes}

Bloom-filter skill is scale-dependent. At MB scale, a student may win through analytic parameter tuning and low-overhead implementation. At GB scale, the frontier is often determined by memory bandwidth, cache locality, parallel construction, vectorized queries, and measurement discipline. The assignment therefore uses two scale-specific tasks:
\[
\mathcal T_{\BF}=\{T_{\MB},T_{\GB}\}.
\]
Each scale \(\lambda\in\{\MB,\GB\}\) has its own evaluator \(E_\lambda\), AI-agent protocol \(\mathcal A_\lambda\), empirical AI-native frontier \(\widehat F_{\AI}^{\lambda}\), reference point \(r_\lambda\), and threshold \(\eta_\lambda\). Separate frontiers are necessary because a design excellent at MB-scale latency may be poor under GB-scale memory-bandwidth constraints.

\paragraph{MB-scale approximate membership.}
In the MB-scale regime, the raw key data is on the order of megabytes. Let \(B_{\MB}\) be the raw bytes of the key set. The cost vector is
\[
C_{\MB}(\pi)=\left(\rho_{\MB}(\pi),p_{\mathrm{fp},\MB}(\pi),\tau_{\mathrm{build},\MB}(\pi),\tau_{\mathrm{qry},\MB}(\pi)\right),
\]
where \(\rho_{\MB}(\pi)=M_{\MB}(\pi)/B_{\MB}\) is the compression ratio relative to the raw key data, and \(E_{\MB}(\pi)=-C_{\MB}(\pi)\). Skills that may move this frontier include choosing bits per key, choosing the number of hash functions, reducing hashing constants, avoiding unnecessary allocations, and tuning bit-array layout.

\paragraph{GB-scale approximate membership.}
In the GB-scale regime, the raw key data is large enough that memory hierarchy and parallelism become central. Let \(B_{\GB}\) be the raw bytes of the key set. The cost vector is
\[
C_{\GB}(\pi)=\left(\rho_{\GB}(\pi),p_{\mathrm{fp},\GB}(\pi),\frac{1}{\Theta_{\mathrm{build},\GB}(\pi)},\frac{1}{\Theta_{\mathrm{qry},\GB}(\pi)},L_{\mathrm{scale},\GB}(\pi)\right),
\]
where \(\Theta_{\mathrm{build},\GB}\) and \(\Theta_{\mathrm{qry},\GB}\) are build and query throughput. To make the scale-loss term concrete, let \(\Theta_{\mathrm{qry},\GB}(k;\pi)\) denote query throughput using \(k\) worker threads and let \(k_{\max}\) be the evaluator's declared thread count. One possible evaluator-defined metric is
\[
L_{\mathrm{scale},\GB}(\pi)=
\left[
1-
\frac{\Theta_{\mathrm{qry},\GB}(k_{\max};\pi)}{k_{\max}\Theta_{\mathrm{qry},\GB}(1;\pi)}
\right]_+.
\]
If \(\Theta_{\mathrm{qry},\GB}(1;\pi)=0\), the evaluator assigns the fixed failing score. This loss is small when parallel query throughput is close to linear and does not penalize superlinear scaling. Again \(E_{\GB}(\pi)=-C_{\GB}(\pi)\). Skills that may move this frontier include blocked Bloom filters, cache-aware layout, SIMD vectorization, batched queries, parallel construction, NUMA-aware partitioning, and memory-bandwidth measurement.

\begin{table}[t]
\centering
\caption{Two fine-grained approximate-membership scenarios centered on Bloom filters. Each scale has its own evaluator and AI-native Pareto frontier.}
\footnotesize
\setlength{\tabcolsep}{4pt}
\begin{tabular}{p{0.10\linewidth}p{0.27\linewidth}p{0.28\linewidth}p{0.25\linewidth}}
\toprule
Scale & Primary objective & Representative cost vector & Human skill emphasized \\
\midrule
MB & Compact in-memory approximate membership with low query latency & \(\rho, p_{\mathrm{fp}}, \tau_{\mathrm{build}}, \tau_{\mathrm{qry}}\) & Parameter tuning, false-positive analysis, low-overhead hashing, memory layout \\
\addlinespace
GB & High-throughput approximate membership under memory-bandwidth and parallelism constraints & \(\rho, p_{\mathrm{fp}}, 1/\Theta_{\mathrm{build}}, 1/\Theta_{\mathrm{qry}}, L_{\mathrm{scale}}\) & Cache-aware blocking, SIMD, batching, parallel construction, NUMA-aware design \\
\bottomrule
\end{tabular}
\end{table}

\subsection{Scale-specific grading}

Here \(\eta_\lambda\) is the scale-specific version of \(\eta_T\), calibrated as
\[
\eta_\lambda
=
\varepsilon_{\bud}^{\lambda}
+
\varepsilon_{\evalerr}^{\lambda}.
\]

For scale \(\lambda\in\{\MB,\GB\}\), define
\[
\delta_\lambda(\pi_\lambda)
=
\HV\left(\widehat F_{\AI}^{\lambda}\cup\{E_\lambda(\pi_\lambda)\};r_\lambda\right)
-
\HV\left(\widehat F_{\AI}^{\lambda};r_\lambda\right).
\]

The surplus grade for that scale is
\[
G_\lambda(\pi_\lambda)=g_{\AI}^{\lambda}+\phi_\lambda\left(\pospart{\delta_\lambda(\pi_\lambda)-\eta_\lambda}\right),
\]
where \(\phi_\lambda(0)=0\) and \(\phi_\lambda\) is monotone. The aggregate suite grade may be
\[
G_\Lambda(\pi_{\MB},\pi_{\GB})=\sum_{\lambda\in\{\MB,\GB\}}w_\lambda G_\lambda(\pi_\lambda),
\qquad w_\lambda\geq0,
\qquad \sum_\lambda w_\lambda=1.
\]
An \emph{AI-native suite} is a pair \((\pi_{\MB},\pi_{\GB})\) with \(\pi_\lambda\in\Pi_{\AI}(T_\lambda)\) for each scale \(\lambda\) that receives positive weight. The aggregate AI-baseline grade is the corresponding weighted aggregate of the scale baseline grades.

\begin{proposition}[Multi-frontier grading]
\label{prop:multi-frontier}
If each scale grade \(G_\lambda\) is assigned by thresholded Pareto surplus over its fixed frontier \(\widehat F_{\AI}^{\lambda}\), then an AI-native suite receives the aggregate AI-baseline grade. If the aggregate grade is above baseline, then there exists at least one scale \(\lambda\) with \(w_\lambda>0\) whose component grade has threshold-exceeding Pareto surplus.
\end{proposition}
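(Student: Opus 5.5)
The plan is to reduce the statement to a per-scale computation and then aggregate with the nonnegative weights. For the first claim, take an AI-native suite \((\pi_{\MB},\pi_{\GB})\) with \(\pi_\lambda\in\Pi_{\AI}(T_\lambda)\) for every \(\lambda\) with \(w_\lambda>0\). For each such \(\lambda\), the score \(E_\lambda(\pi_\lambda)\) lies in the empirical AI score set whose nondominated subset is \(\widehat F_{\AI}^{\lambda}\). Hence \(E_\lambda(\pi_\lambda)\) is weakly dominated by some frontier point, that is, \(E_\lambda(\pi_\lambda)\in\Down(\widehat F_{\AI}^{\lambda})\). Applying \Cref{thm:skill-measure} (or just the elementary fact that the box \([r_\lambda,E_\lambda(\pi_\lambda)]\) is contained in an existing frontier box) gives \(\delta_\lambda(\pi_\lambda)=0\). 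Since \(\eta_\lambda\geq 0\), we get \(\pospart{\delta_\lambda-\eta_\lambda}=0\), so \(G_\lambda(\pi_\lambda)=g_{\AI}^{\lambda}+\phi_\lambda(0)=g_{\AI}^{\lambda}\). Scales with \(w_\lambda=0\) contribute nothing. Summing then gives \(G_\Lambda=\sum_\lambda w_\lambda g_{\AI}^{\lambda}\), which is the aggregate baseline grade. This is the same argument as part 1 of \Cref{thm:pareto-grading}, applied scale by scale.

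For the second claim I will argue by contraposition. Suppose that for every \(\lambda\) with \(w_\lambda>0\) the threshold is not exceeded, i.e.\ \(\delta_\lambda(\pi_\lambda)\leq\eta_\lambda\). Then \(\pospart{\cdot}=0\), and each component grade equals \(g_{\AI}^{\lambda}\). Because the weights are nonnegative, the aggregate equals the baseline aggregate and is not above it. Contrapositively, if \(G_\Lambda>\sum_\lambda w_\lambda g_{\AI}^\lambda\), then \(\sum_\lambda w_\lambda(G_\lambda-g_{\AI}^\lambda)>0\), so some \(\lambda\) has \(w_\lambda>0\) and \(G_\lambda>g_{\AI}^\lambda\). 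For that \(\lambda\), monotonicity of \(\phi_\lambda\) together with \(\phi_\lambda(0)=0\) means a positive value \(\phi_\lambda(\pospart{\delta_\lambda-\eta_\lambda})>0\) requires a nonzero argument. This forces \(\delta_\lambda(\pi_\lambda)>\eta_\lambda\), which is the desired threshold-exceeding surplus. If an ideal-surplus conclusion is wanted, \Cref{thm:thresholded} can then be invoked on that scale.

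The main obstacle is a bookkeeping point rather than a deep one. I need \(\phi_\lambda(x)>0\Rightarrow x>0\), and this is exactly where \(\phi_\lambda(0)=0\) together with nonnegativity of the codomain is used. I also need to make sure that an AI-native component is scored against the same frozen frontier \(\widehat F_{\AI}^{\lambda}\) that contains its score. I would state this explicitly: the empirical frontier is built from the frozen baseline's evaluated scores, so membership in \(\Down\) is immediate. Nonadmissible components are handled by the convention \(\delta=0\), which again yields the baseline component grade.
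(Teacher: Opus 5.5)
Your proof is correct and follows essentially the same route as the paper's. You reduce to each scale via the argument of Theorem~\ref{thm:pareto-grading}: AI-native scores lie in \(\Down(\widehat F_{\AI}^{\lambda})\), so \(\delta_\lambda=0\) and thresholding cannot raise the grade. Then you use the nonnegative weights to find a positive-weight component strictly above its baseline, which forces \(\delta_\lambda>\eta_\lambda\). You are also more explicit than the paper about two assumptions it leaves implicit: \(\eta_\lambda\geq 0\), and that \(\widehat F_{\AI}^{\lambda}\) is built from the same scores used at grading time.
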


\noindent\textbf{Practical implication.} MB-scale and GB-scale approximate-membership work emphasize different skills. Separate frontiers prevent a solution optimized for one scale from being treated as universally good. The proof is in \Cref{app:proof-multi}.

If both the MB- and GB-scale regimes produce a null-frontier outcome, that is itself informative: the construction has not witnessed approximate-membership knowledge as a grade-bearing AI-resilient skill under the declared evaluator and baseline, and the instructor should revise either the evaluator or the target capability.

\subsection{Hidden server feedback}

For each scale, the course server keeps hidden benchmark data \(\mathcal D_{\hid}^{\lambda}\) generated from the declared workload family using unreleased seeds, splits, mixture parameters, and evaluation instances. During development it releases only fixed aggregate feedback
\[
\psi_\lambda(E_{\lambda,\hid}(\pi)),
\]
such as no-false-negative pass/fail, coarse bins for memory ratio, false-positive rate, build throughput, query throughput, and frontier position. It does not release hidden inputs, hidden outputs, per-query labels, or per-instance errors. Each student receives the same official query budget \(q_\lambda\), and final grading uses a separate held-out evaluation split through \(E_\lambda\). Students may use AI freely between server queries.

If neither scale produces meaningful student surplus beyond the declared AI frontier, that outcome is itself informative: the assignment, as designed, has not witnessed Bloom-filter knowledge as a grade-bearing AI-resilient skill, and the instructor should either revise the evaluator or move the grade-bearing target to a higher-level capability built on top of approximate-membership understanding.

\paragraph{Budget-neutrality margin for the Bloom-filter assignment.}
The Bloom-filter construction does not prove budget neutrality from the mathematics of Bloom filters alone. It pursues a budget-neutrality margin operationally. For each scale \(\lambda\), the course freezes a strong AI-native frontier, red-teams generic AI-only improvement strategies, restricts official server feedback to coarse aggregate summaries, enforces the same query budget for all students, and grades only surplus beyond the published threshold \(\eta_\lambda\). The hidden nonmember workload is drawn from a declared workload family, but the final seed, split, and evaluation instances are not released. Thus students know the kind of workload for which they are designing, while the feedback channel is too limited to serve as a rich training set for benchmark overfitting.

This claim is intentionally modest. Additional private AI calls may help students search over generic implementation variants, but those variants should be represented in the frozen baseline or absorbed into \(\eta_\lambda\). If a private AI-only red team can use the allowed feedback to infer hidden workload structure and obtain threshold-exceeding surplus, then the Bloom-filter assignment has failed the budget-neutrality test at that threshold. The remedy is to regenerate the frontier, restrict feedback further, increase the safety buffer, or revise the workload family.

\subsection{Automated relative grading and solution-leakage incentives}
\label{sec:automated-incentives}

Executable evaluation also changes the incentives around collusion and unauthorized solution sharing. In a conventional assignment, copying a solution may help both the giver and the receiver if grading is mostly absolute and the copied artifact is hard to detect. In a frontier-based assignment with relative or percentile-calibrated surplus grading, a frontier-moving idea is valuable because it is scarce. If a student gives away the implementation or the key design insight that moves the frontier, the recipient can occupy the same region of surplus space. Under a percentile-calibrated surplus map, this can lower the helper's relative position and therefore reduce the helper's own grade. The mechanism does not replace collaboration rules or academic integrity policy, but it changes the incentive: leaking the final frontier-moving trick is not free.

Minimal server feedback reinforces this effect. Because the server does not release hidden inputs, hidden outputs, per-instance labels, or per-instance errors, pooling server responses is not the same as pooling a rich training set for overfitting the benchmark. Students may still discuss concepts, design principles, and debugging strategies as permitted by the course policy. But undisclosed transfer of code, prompts, parameter choices, or final optimizations that create surplus can directly help competitors in the same relative grade space.

The same structure makes grading highly automatable. Once the evaluator, frontiers, thresholds, and surplus-to-grade map are fixed, the course server can run submissions, compute score vectors, check hard constraints, compute Pareto surplus, apply percentile or absolute grade maps, and produce audit logs. Human instructors still choose the task, validate the evaluator, inspect reports or ablations, and decide the collaboration policy. But the repetitive grading work is executable. This is a case where AI and automation help instructors at least as much as they help students: AI-assisted development can build the evaluator and baselines, while the grading server can apply them consistently at scale.

\section{Red-Teaming the Assignment}
\label{sec:red-team}

A good AI-native assignment should be attacked before it is released. The red-team question is not ``Can anyone use AI?'' The answer is yes, and that is intended. The question is:
\begin{quote}
Can a low-skill AI-use strategy obtain above-baseline credit?
\end{quote}

The first attack is generic delegation: give the assignment, evaluator description, and public frontier to AI and ask it to beat the baseline. This attack should be part of baseline generation. If it works after the frontier is published, the frontier was not strong enough.

The second attack is protocol leakage: ask AI to infer or recover hidden inputs, hidden outputs, per-instance errors, or unauthorized evaluator information. This is not AI-use competence; it is an attack on the feedback oracle. The server design should prevent it by releasing only fixed aggregate feedback.

The third attack is frontier diagnosis. Here the student does not simply delegate the assignment; the student identifies a part of the frontier to improve, gives a task-specific reason why improvement may be possible, and asks AI to implement or test that idea. This is not a failure of the assignment. The prompt may itself contain the conceptual structure the course wants to teach.

\begin{definition}[Prompt-only process]
A \emph{prompt-only process} is an allowed AI-use process represented by an auditable trace of natural-language prompts, AI outputs, official server feedback, and candidate-selection decisions. The student's direct contribution in such a trace is prompting, selection among AI-generated candidates, and interpretation of official server feedback. The AI may write and revise code, but the student does not manually implement the final technical change except through prompts. If manual edits are allowed, they are treated as ordinary human-AI submissions rather than prompt-only submissions.
\end{definition}

\begin{definition}[Frontier-improvement rationale]
A \emph{frontier-improvement rationale} for task \(T\) is a task-specific explanation that identifies: (i) the frontier region or evaluator coordinate being targeted; (ii) a hypothesis for why the declared baseline may be improvable there; (iii) a proposed design change, algorithmic idea, representation, parameter choice, or solution strategy; and (iv) the expected tradeoff in the official evaluator coordinates. The rationale is valid when these claims are testable under \(E_T\) or by the supporting assessment artifacts. Validity here means testability and task relevance; it does not require that the hypothesis be correct before evaluation.
\end{definition}

\begin{definition}[Prompt-skill implication]
Let \(F_0\) be the declared AI-native frontier used for grading, and let
\(\widehat{\delta}_T(\pi;F_0)\) denote the observed Pareto surplus of a submission \(\pi\)
under the official evaluator. A prompt-only process \(P\) is called
\emph{\(\eta\)-grade-bearing} if its auditable trace produces an admissible submission \(\pi_P\) such that
\[
\widehat{\delta}_T(\pi_P;F_0)>\eta .
\]
Equivalently, under thresholded grading, \(\pi_P\) receives positive surplus credit above
the AI-baseline grade.

An assignment satisfies \emph{prompt-skill implication at threshold \(\eta\)} if every
\(\eta\)-grade-bearing prompt-only process falls into at least one of the following categories: it exposes an incomplete baseline, it uses hidden-input leakage, hidden-output leakage, unauthorized per-instance feedback, or another feedback-protocol violation, or its trace contains a valid frontier-improvement rationale.
\end{definition}

This definition is intentionally assignment-agnostic. In a systems assignment, the rationale may identify a performance bottleneck. In a theory-heavy or modeling-heavy task, it may identify a missing invariant, a weaker bound, a failure mode, a distributional assumption, or another task-specific reason the declared frontier can be moved.

For the Bloom-filter construction, a frontier-improvement rationale might say that the GB-scale baseline appears query-time limited by random memory probes, propose a blocked Bloom-filter layout with double hashing, and predict the false-positive-rate versus throughput tradeoff. A student who can write such a prompt has already understood important parts of the assignment.

\begin{designcriterion}[Prompt-skill red-team checklist]
\label{crit:prompt-skill}
Before release, the instructor should red-team prompt-only strategies at threshold \(\eta\) using three checks:
\begin{enumerate}
    \item include generic delegation prompts and looped generate--feedback--revise prompts in the public baseline-generation process;
    \item verify that the server oracle blocks hidden-input leakage, hidden-output leakage, and unauthorized per-instance feedback;
    \item inspect any grade-bearing prompt-only traces found during red-teaming and classify each as a baseline gap, a protocol violation, or a trace containing a valid frontier-improvement rationale.
\end{enumerate}
\end{designcriterion}

\noindent\textbf{Practical implication.} If ``solve this and beat the baseline'' works, harden the baseline. If a prompt wins by giving a testable reason for moving a specific part of the frontier, then writing that prompt is already part of the intended learning outcome. This is a release criterion and checklist, not a theorem independent of the preceding definition.

\section{Discussion, Limitations, and Deployment}
\label{sec:discussion}

\paragraph{From speculation to outcome-driven debate.}
The motivation for this paper is practical: AI-assisted technical work is changing quickly, and course design needs objects that can be analyzed. Personal exposure to frontier AI development motivates the proposal, but it is not offered as evidence. The paper instead asks instructors to specify outcomes: the task, evaluator, AI protocol, feedback oracle, frontier, budget assumptions, threshold, and grading rule. This moves debate from speculation, fear, and analogy toward concrete designs that can be criticized.

\paragraph{Understanding the evaluator is already learning.}
The approximate-membership construction teaches the topic partly by making the evaluation criterion explicit. To understand what it means to win, a student must understand why false negatives are forbidden, why false positives are allowed, how false-positive rate trades off against memory, why hash count and hash quality matter, and why the same mathematical structure behaves differently at MB and GB scale. Reading the evaluator and the frontier is not peripheral gaming. It is operational understanding of the data structure.

\paragraph{The bar for human learning is raised.}
In a traditional randomized-algorithms or data-structures course, knowing the definition, proof, and standard false-positive analysis of a Bloom filter may be the main target. In this assignment, that knowledge is closer to the entry point. To move the frontier, students may need to combine the standard theory with systems-level choices: cache-aware blocking, double hashing, vectorization, batching, memory alignment, construction throughput, parallelism, NUMA effects, and memory-bandwidth-aware layout. This is a feature of AI-native education: once AI can produce the standard solution, the human learning target moves toward diagnosis, integration, and scale-aware design. It is also a debatable cost, because the assignment may blur the boundary between the nominal topic and adjacent systems knowledge.

\paragraph{When the frontier does not move.}
The null-frontier outcome is one of the useful failure modes of the framework. If no student artifact moves the frontier, the course has learned something concrete: the assignment, as designed, did not separate students with the intended skill from the declared AI baseline. That conclusion should be interpreted carefully. Some skills remain worth teaching as foundations, vocabulary, debugging tools, or prerequisites for higher-level reasoning. Students may need to understand Bloom filters, hashing, or randomized algorithms even if AI can generate standard implementations. But in a senior AI-native assessment, foundational importance alone may not be enough for the main grade-bearing objective. If a skill does not produce measurable advantage over a person with AI but without that skill, then the instructor should consider revising the evaluator, strengthening the assignment, or moving the grade-bearing target to a higher-level capability built on top of that foundation.

\paragraph{AI-use skill is part of the outcome.}
A student who uses AI better should often do better. Better AI use means formulating sharper hypotheses, asking for targeted variants, checking generated claims against the evaluator, and combining model output with domain knowledge. The assignment therefore rewards both domain skill and AI-use competence. The server-mediated feedback design is meant to limit the degree to which this becomes primarily a contest of who can buy more private model calls.

\paragraph{Limits and standing assumptions.}
The formal claims are relative to the declared task, evaluator, AI baseline-generation process, feedback oracle, frontier-freeze time, private-budget envelope, threshold, and grading rule. They do not characterize every possible future AI system or unlimited private budget. Budget neutrality is implemented by controlling official benchmark feedback and thresholding small surplus; it is a design goal to red-team, not a metaphysical guarantee. In particular, the Bloom-filter construction should be read as a practical test case for calibrated budget-neutrality margins, not as a proof that approximate-membership tasks are intrinsically budget-neutral. Prompt-skill implication is also a red-team criterion: generic prompts belong in the baseline, low-skill prompts that obtain surplus expose a weakness, and task-specific frontier-improvement rationales may be evidence of learning.

Pareto surplus certifies submitted artifacts, not unaided authorship or internal competence by itself. The student-skill inference comes from the full assessment design: the hidden evaluator, equal feedback policy, design report, ablations, prompt traces, reproducibility checks, and any oral or written follow-up the instructor chooses to include. The framework is easiest to apply when the course can define executable tasks and multidimensional evaluators. It is less immediate for objectives such as communication, taste, conceptual explanation, or ethical reasoning. The approach also requires instructor effort: baseline generation, feedback-oracle design, evaluator calibration, red-team testing, and threshold selection. These costs are real, but they are also the work required to make AI-native assessment explicit rather than implicit.

The first deployment deliberately focuses on MB- and GB-scale benchmarks. Very large out-of-core or distributed benchmarks are scientifically interesting, but they are expensive to evaluate, difficult to equalize across student hardware or server queues, and too slow for a course feedback cycle.

\paragraph{Planned COMP 480/580 deployment.}
The first planned classroom deployment is COMP 480/580 at Rice University in Fall 2026. The intended preparation is to construct and harden the AI frontiers, freeze the server-feedback policy, calibrate the thresholds, and red-team generic prompts, protocol attacks, and frontier-diagnostic prompts before release. The purpose of sharing this draft now is to invite precise criticism before implementation: stronger baselines, better feedback policies, sharper thresholds, and alternative tasks.

\section*{AI Usage Disclosure}

The author used generative AI tools, including OpenAI ChatGPT, as interactive assistants during the preparation of this manuscript. These tools were used for brainstorming, rewriting, organization, LaTeX drafting, checking readability, and red-teaming the exposition and formal definitions. They were not used as authors, and no AI system is credited with authorship. The author reviewed, edited, and is solely responsible for all claims, definitions, proofs, citations, examples, and final text in the manuscript. No empirical results are reported in this position paper, and no data were generated or analyzed by AI for the purpose of making empirical claims.

\appendix

\section{Proofs}
\label{app:proofs}

\subsection{Proof of Theorem~\ref{thm:skill-measure}}
\label{app:proof-skill}
\begin{proof}
For a finite frontier \(F_{\AI}(T)\) and a point \(E_T(\pi)\) strictly above the fixed reference point \(r\), adding \(E_T(\pi)\) increases dominated hypervolume exactly when the axis-aligned box from \(r\) to \(E_T(\pi)\) contains positive measure not already covered by the boxes generated by \(F_{\AI}(T)\). If some frontier point weakly dominates \(E_T(\pi)\), then the box \([r,E_T(\pi)]\) is already contained in that frontier box, so the surplus is zero. Conversely, if no frontier point weakly dominates \(E_T(\pi)\), then for each \(f\in F_{\AI}(T)\) there is a coordinate \(j(f)\) with \(f_{j(f)}<E_T(\pi)_{j(f)}\). Since the frontier is finite and \(E_T(\pi)\) is strictly above \(r\), the positive gaps over these witnesses, together with the coordinate distances from \(r\) to \(E_T(\pi)\), have a positive minimum. Hence there is a small open rectangle adjacent to \(E_T(\pi)\) that lies inside \([r,E_T(\pi)]\) and outside all frontier boxes. That rectangle has positive measure, so the surplus is positive. Therefore
\[
\delta_T(\pi)>0
\quad\Longleftrightarrow\quad
E_T(\pi)\notin\Down(F_{\AI}(T)).
\]
For a finite assessment score set \(S_H^\Gamma(T;\sigma)\), the hypervolume expansion \(\Delta_T^\Gamma(\sigma)\) is positive if and only if at least one finite score in \(S_H^\Gamma(T;\sigma)\) adds positive hypervolume. By the equivalence above, this is exactly the existence of a score not dominated by the AI-native frontier.
\end{proof}

\subsection{Proof of Theorem~\ref{thm:pareto-grading}}
\label{app:proof-grading}
\begin{proof}
Let \(\pi\in\Pi_{\AI}(T)\). Then \(E_T(\pi)\in S_{\AI}(T)\). Since \(F_{\AI}(T)=\PF(S_{\AI}(T))\), every AI-native score vector is weakly dominated by some point on the AI-native Pareto frontier. Therefore \(E_T(\pi)\in\Down(F_{\AI}(T))\), and adding \(E_T(\pi)\) to the frontier does not increase hypervolume. Thus \(\delta_T(\pi)=0\), and \(G_T(\pi)=g_{\AI}+\phi(0)=g_{\AI}\).

Conversely, if \(G_T(\pi)>g_{\AI}\), then \(\phi(\delta_T(\pi))>0\). Since \(\phi(0)=0\) and \(\phi\) is nonnegative, this requires \(\delta_T(\pi)>0\). Therefore every above-baseline surplus grade has positive measured Pareto surplus.
\end{proof}

\subsection{Proof of Theorem~\ref{thm:thresholded}}
\label{app:proof-thresholded}
\begin{proof}
Let \(\pi\) be an AI-only submission generated by a protocol \(\mathcal A\) in the declared private envelope \(\mathfrak A_{\Priv}\). By the definition of \(\Pi^*(\mathcal A)\), this means \(\pi\in\Pi^*(\mathcal A)\), so \(E_T^*(\pi)\in S^*(\mathcal A)\). By \(\varepsilon_{\bud}\)-saturation, adding the entire set \(S^*(\mathcal A)\) to \(F_0\) increases hypervolume by at most \(\varepsilon_{\bud}\); adding the single point \(E_T^*(\pi)\) therefore has ideal surplus at most \(\varepsilon_{\bud}\):
\[
\delta_T^*(\pi;F_0)\leq \varepsilon_{\bud}.
\]
By the one-sided evaluator-error assumption,
\[
\widehat\delta_T(\pi;F_0)\leq \delta_T^*(\pi;F_0)+\varepsilon_{\evalerr}\leq \varepsilon_{\bud}+\varepsilon_{\evalerr}=\eta_T.
\]
Hence
\[
\pospart{\widehat\delta_T(\pi;F_0)-\eta_T}=0,
\]
and \(G_T^{\eta}(\pi)=g_{\AI}\). If the evaluator is randomized, this conclusion holds at the confidence level attached to the one-sided error bound.

Conversely, if \(G_T^{\eta}(\pi)>g_{\AI}\), then \(\widehat\delta_T(\pi;F_0)>\eta_T\). Whenever the one-sided evaluator-error bound holds,
\[
\delta_T^*(\pi;F_0)
\geq \widehat\delta_T(\pi;F_0)-\varepsilon_{\evalerr}
> \eta_T-\varepsilon_{\evalerr}
=\varepsilon_{\bud}.
\]
Thus above-baseline surplus credit implies ideal surplus beyond the declared private-AI-budget margin.
\end{proof}

\subsection{Proof of Proposition~\ref{prop:multi-frontier}}
\label{app:proof-multi}
\begin{proof}
For an AI-native suite submission, each scale component belongs to the corresponding AI-native submission set. By \Cref{thm:pareto-grading}, each component receives its scale baseline surplus grade, and thresholding cannot increase it. Therefore the weighted aggregate is the weighted aggregate of the scale baselines. Conversely, if the aggregate grade is strictly larger than the aggregate baseline, then at least one component with positive weight \(w_\lambda>0\) must be strictly larger than its scale baseline, which implies threshold-exceeding Pareto surplus in that scale.
\end{proof}

\section{Assignment Handout Skeleton}

\paragraph{Task.}
You are given two approximate-membership tasks centered on Bloom filters: MB and GB. For each task, your program must build a compact representation of a set and answer membership queries with no false negatives. False positives are allowed and measured. The handout states whether submissions are restricted to Bloom-filter-family designs or may use broader approximate-membership systems.

\paragraph{Baseline.}
For each scale, we provide an AI-native Pareto frontier generated before the assignment using a fixed AI-agent protocol, including looped generate--feedback--revise baselines. The baseline-freeze record includes models, prompts, tools, budgets, server-feedback access, candidate counts, selection rules, seeds where applicable, and freeze time. Grade-bearing improvement requires exceeding the frontier by more than the published threshold for evaluator noise and plausible private AI-only budget advantage.

\paragraph{Submission.}
You may submit either one configurable implementation or separate implementations for the two scales. Your code must expose a fixed build/query interface and run on the specified hardware.

\paragraph{Correctness.}
Any false negative on the hidden benchmark is a hard failure for the corresponding scale. False positives are allowed and become part of the score vector.

\paragraph{Server feedback.}
The hidden benchmark inputs and outputs are not released. During development, the server returns only fixed aggregate feedback and enforces the same query budget for all students. You may use AI freely, and learning to use AI effectively is part of the assignment.

\paragraph{Collaboration and leakage.}
You may discuss course-permitted concepts and high-level design ideas according to the collaboration policy. Do not share final code, private prompts, tuned parameters, hidden-feedback summaries, or frontier-moving optimizations outside the allowed collaboration structure. Because the surplus component may be percentile-calibrated, giving another team your best solution can reduce your own relative grade.

\paragraph{Scoring.}
For each scale, your surplus score is determined by thresholded Pareto surplus over the published AI frontier. The course staff will not attempt to infer whether your code was written with AI assistance. The evaluator measures behavior, not authorship. Other course components may check correctness, reproducibility, report quality, and minimum-performance requirements.

\paragraph{Design report.}
Your report should explain which frontier you attempted to move, what tradeoff you targeted, which design ideas you tried, what failed, and why your final implementation improves or fails to improve the baseline. Supporting evidence may include ablations, prompt traces, profiling data, and reproducibility notes.

\section{Notation Summary}

\begin{longtable}{p{0.24\linewidth}p{0.68\linewidth}}
\toprule
Symbol & Meaning \\
\midrule
\(T\) & Real task \\
\(\mathcal I\) & Task interface and benchmark protocol \\
\(\Omega_{\mathcal I}\) & Raw submission space induced by \(\mathcal I\) \\
\(\Pi_{\mathcal I}\) & Admissible submissions induced by \(\mathcal I\) \\
\(E_T\) & Total executable evaluator mapping raw submissions to score vectors, with frontier claims restricted to admissible nonfailing scores \\
\(S_{\AI}(T)\) & Score set of AI-native submissions under the declared baseline process \\
\(E_T^*,\widehat E_T\) & Ideal and observed evaluator scores \\
\(F_{\AI}(T)\) & AI-native Pareto frontier \\
\(\delta_T(\pi)\) & Pareto surplus of submission \(\pi\) over the AI frontier \\
\(\sigma\) & Candidate human skill \\
\(\Gamma\) & Finite assessment protocol used to produce human or student score vectors \\
\(\Delta_T^\Gamma(\sigma)\) & Finite-protocol hypervolume expansion associated with candidate skill \(\sigma\) \\
\(G_T\) & Grading rule \\
\(\mathcal A\) & AI-agent protocol \\
\(\mathcal O_{\srv}\) & Server-mediated feedback oracle \((\mathcal B_{\hid},q,E_{T,\hid},\psi)\) \\
\(\psi\) & Fixed feedback map hiding inputs, outputs, and per-instance errors \\
\(\mathfrak A_{\Priv}\) & Declared envelope of plausible private AI-only protocols \\
\(\eta_T\) & Budget-and-noise threshold used before awarding surplus credit \\
\(F_0\) & Public frontier used for thresholded grading \\
\(T_{\MB},T_{\GB}\) & MB- and GB-scale approximate-membership tasks centered on Bloom filters \\
\(\widehat F_{\AI}^{\lambda}\) & Empirical AI frontier for scale \(\lambda\) \\
\bottomrule
\end{longtable}

\end{document}